\def\ie{{\em i.e.}\xspace}
\newcommand{\ZZ}{\mathbb{Z}}
\newcommand{\NN}{\mathbb{N}}
\newtheorem{theorem}{Theorem}
\newtheorem{definition}{Definition}
\newtheorem{lemma}{Lemma}[section]
\newtheorem{corollary}[theorem]{Corollary}
\newcommand{\pizu}{$\Pi^0_1$\xspace}
\newcommand{\cantor}{\ensuremath{\left\{0,1\right\}^\NN}}
\newcommand{\zol}{\ensuremath{\left\{0,1\right\}^*}}
\newcommand{\turinf}{\leq_T}
\newcommand{\tursup}{\geq_T}
\newcommand{\turequiv}{\equiv_T}
\newcommand{\turdeg}{\deg_T}
\newcommand{\minimal}{{\ensuremath{\mathcal M}}}
\newcommand{\removable}{{\ensuremath{\mathcal C}}\xspace}
\begin{document}

\title{Turing degree spectra of minimal subshifts}
\author{Mike Hochman and Pascal Vanier\thanks{This work was        partially
 supported by grants EQINOCS ANR 11 BS02 004 03, TARMAC ANR 12 BS02 007 01 and   ISF grant 1409/11.}}

\maketitle

\begin{abstract}
Subshifts are shift invariant closed subsets of $\Sigma^{\ZZ^d}$, minimal 
subshifts are subshifts in which all points contain the same patterns. It has
been proved by Jeandel and Vanier that the Turing degree spectra of 
non-periodic minimal
subshifts always contain the cone of Turing degrees above any of its degree. It
was however not known whether each minimal subshift's spectrum was formed of exactly one cone or not. We construct inductively a minimal subshift whose spectrum consists of an uncountable number of cones with disjoint base.
\end{abstract}

A $\ZZ^d$-subshift is a closed shift invariant subset of $\Sigma^{\ZZ^d}$.
Subshifts may be seen as sets of colorings of $\ZZ^d$, with a finite number
of colors, avoiding some set of forbidden patterns. Minimal subshifts are subshifts containing no proper subshift, or equivalently subshifts in which 
all configurations have the same patterns. They are fundamental in the sense
that all subshifts contain at least a minimal subshift.

Degrees of unsolvability of subshifts have now been studied for a few years,
\citet{CDK2008,CDTW2010,CDTW2012} studied computability of one dimensional subshifts and proved some results about their Turing degree spectra: the \emph{Turing degree spectrum} of a subshift is the set of Turing degrees of its points, see \citet{KL2010}. \citet{Sim2011}, building on the work of \citet{Han1974,Mye1974}, noticed that the Medvedev and Muchnik degrees
of subshifts of finite type (SFTs) are the same as the Medvedev degrees of 
\pizu classes: \emph{\pizu classes} are the subsets of \cantor for which there exists a Turing machine halting only on oracles not in the subset.

Subsequently \citet{JeandelV2013:turdeg} focused on Turing degree spectra 
of different classes of multidimensional subshifts: SFTs, sofic and effective subshifts. They proved in particular that the Turing degree 
spectra of SFTs are almost the same as the spectra of \pizu classes: adding
a computable point to the spectrum of any \pizu class, one can construct an 
SFT with this spectrum. In order to prove that one cannot get a stronger
statement, they proved that the spectrum of any non-periodic 
minimal subshift contains the cone above any of its degrees:

\begin{theorem}[\citet{JeandelV2013:turdeg}]
    Let $X$ be a minimal non-finite subshift (\ie non-periodic in 
    at least one direction). For any point $x\in X$ and 
    any degree $\mathbf d\tursup \turdeg x$, there exists a point $y\in X$ 
    such that $\mathbf d=\turdeg y$.
\end{theorem}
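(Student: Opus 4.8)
The plan is to realize an arbitrary $\mathbf d \tursup \turdeg{x}$ by steering a point of $X$ along the branchings of its tree of patterns. Fix an oracle $D$ with $\turdeg{D}=\mathbf d$; since $x\turinf D$ we may freely scan $x$ inside a $D$-computation. I construct $y\in X$ as a limit $y=\lim_n \sigma^{t_n}x$ of shifts of $x$: by minimality the orbit of $x$ is dense in $X$, so every point arises this way, and since $X$ is closed the limit automatically lies in $X$ provided the shifted copies stabilise on a growing sequence of boxes $B_n$. Concretely, at stage $n$ I will have committed to the central pattern $P_n=y|_{B_n}$ (an element of the language $L(X)$) and I extend it to $P_{n+1}$ on a larger box; locating an occurrence of the target pattern inside $x$ and setting $t_{n+1}$ accordingly is done by scanning $x$, a search that halts because, by minimality, $P_{n+1}\in L(X)=L(x)$ really does occur in $x$, indeed syndetically. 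As every step is $x$-computable and the only genuine decisions are dictated by bits of $D$, the resulting $y$ satisfies $y\turinf D$.

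The information of $D$ is injected at the branchings of the tree of patterns. Non-periodicity makes the complexity $p_X(n)$ (the number of box-patterns of $X$ on $B_n$) tend to infinity, so this tree branches infinitely often; moreover every pattern has a branching descendant, since a pattern all of whose extensions are forced would determine a unique configuration on a half-space and, with minimality, force periodicity. (In the one-dimensional picture this is the familiar statement that every factor of an aperiodic minimal subshift has at least two return words.) The encoding is then transparent: having reached the $i$-th branching node along the central path, I extend $P_n$ towards the child selected by the bit $D(i)$. Both children lie in $L(X)$ by the very definition of a branching node, so $P_{n+1}\in L(X)$ and the stage-$n$ search succeeds; between branchings the extension is forced and carries no information. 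This gives a free binary encoding of $D$ into $y$.

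The reverse inequality $D\turinf y$ is where the real difficulty lies, and it is what the construction must be engineered around. The decoder is given $y$ alone: it may not use $x$, because two points of a minimal subshift can be Turing-incomparable --- precisely the phenomenon this paper exploits --- so there is no reason to expect $x\turinf y$. From $y$ one does recover, for free, the whole nested sequence $P_0\subset P_1\subset\cdots$, since $P_n=y|_{B_n}$. The obstruction is that reading the encoded bits requires recognising which $P_n$ are branching nodes, \ie{} whether $P_n$ has a second extension in $L(X)$; but $L(X)=L(y)$ is only \emph{computably enumerable} in $y$ --- one can confirm that a pattern occurs by searching $y$, yet in general cannot certify that it does not --- so branching is a priori only semi-decidable relative to $y$, not decidable. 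Compounding this, uniform recurrence forces \emph{every} finite block, in particular any marker or coding block I might place, to reappear syndetically throughout $y$; hence no coding site can be singled out by the mere presence of a distinguished pattern.

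The way through is to make the coding sites self-locating. I would fix, once and for all, a finite amount of data about $X$ (a synchronising marker pattern together with the two coding blocks at a chosen branching context) and hard-wire it into the decoding machine, then arrange the markers hierarchically so that the genuine coding sites along the central path are exactly those identifiable by a $y$-computable scan whose termination is guaranteed by the syndetic (bounded-gap) recurrence supplied by minimality. In short, aperiodicity furnishes the two choices, while minimality is used twice: to make the encoder's forward searches in $x$ halt, and to turn the decoder's searches in $y$ into genuinely halting computations despite $L(X)$ being merely computably enumerable in $y$. Verifying that this marker scheme is robust against the unavoidable spurious occurrences of every block --- so that the decoder reads back exactly the sequence $D$ --- is the technical heart of the argument; once it is in place one concludes $y\turequiv D$, whence $\turdeg{y}=\mathbf d$ with $y\in X$, as required.
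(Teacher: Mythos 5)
This theorem is quoted from \citet{JeandelV2013:turdeg} and not proved in the present paper, so your attempt can only be judged on its own merits. Your overall strategy (steer a limit of shifts of $x$ through incompatible extensions, using minimality to make all searches halt and aperiodicity to guarantee that two incompatible extensions always exist) is the right one, and you correctly identify that the direction $D\turinf y$ is the crux. But the write-up has two genuine gaps. First, already on the encoding side, ``the $i$-th branching node along the central path'' is not a $D$-computable object: a pattern's having two incompatible extensions in $L(X)$ is only $\Sigma^0_1$ relative to $x$, so non-branching is merely co-c.e.\ and neither the encoder nor the decoder can certify that ``between branchings the extension is forced.'' The standard repair is not to locate branching nodes at all but, at each stage, to search (c.e.\ in $x\turinf D$) for \emph{some} pair of incompatible extensions of the current pattern; this makes $y\turinf D$ work but makes the pair depend on the timing of an $x$-relative search, which is exactly what poisons the decoding.

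Second, and decisively, the reduction $D\turinf y$ is announced rather than constructed. Your ``self-locating hierarchical marker scheme'' is a statement of intent: you never specify how the decoder, with oracle $y$ alone, recovers the candidate pair $\{u,v\}$ from which the encoder chose --- the set of extensions of a word in $L(X)=L(y)$ is only c.e.\ in $y$, so the decoder can confirm candidates but never certify it has found the pair the encoder used, and syndetic recurrence (which you do note) defeats any attempt to flag the coding site by the mere presence of a pattern. Since this is precisely the step you yourself call ``the technical heart of the argument,'' leaving it unverified means the theorem is not proved. One concrete way to close the gap, in the spirit of \citet{JeandelV2013:turdeg}, is to encode each bit not as ``which of $u,v$ was chosen'' but as an equality or inequality between two blocks at positions the decoder can locate from $y$ (e.g.\ consecutive return words in a hierarchical return-word decomposition, whose tiling of $y$ is $y$-computable because each search for the next occurrence of the current marker halts by minimality); decoding then becomes a direct comparison and needs no knowledge of $L(X)$ beyond what is readable in $y$. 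As it stands, your argument establishes $y\in X$ and $y\turinf D$ but not $D\turinf y$.
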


Minimal subshifts are in particular interesting since any subshift contains
a minimal subshift \cite{Bir1912}. Here we answer the followup question 
of whether a 
minimal subshift always corresponds to a single cone or if there 
exists one with at least two cones of disjoint base. It is quite easy to
prove the following theorem:

\begin{theorem}
 For any Turing degree $\turdeg d$, there exists a minimal subshift $X$ such 
that the set of 
 Turing degrees of the points of $X$ is a cone of base $\turdeg d$.
\end{theorem}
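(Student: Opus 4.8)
The plan is to realize the cone of base $\turdeg d$ as a spectrum by combining the previous theorem with one explicit construction. Write $\mathrm{Spec}(X)$ for the set of Turing degrees of the points of $X$, and note that the cone of base $\turdeg d$ is exactly the set of degrees $\tursup\turdeg d$. Fix any set $D\subseteq\NN$ of degree $\turdeg d$, regarded as a bit sequence $D(0)D(1)\cdots$. It then suffices to build an infinite minimal subshift $X$ enjoying two properties: (i) $X$ contains a point $t$ with $\turdeg t=\turdeg d$, and (ii) every point $y\in X$ satisfies $y\tursup D$, i.e.\ $\turdeg y\tursup\turdeg d$. Indeed, (ii) forces $\mathrm{Spec}(X)$ into the cone of base $\turdeg d$, while (i) together with the previous theorem, applied to the point $t$ whose degree is $\turdeg d$, forces $\mathrm{Spec}(X)$ to contain every degree $\tursup\turdeg d$; the two inclusions give exactly the cone.

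For the construction I would use a Toeplitz sequence encoding $D$. Recall that $t$ is \emph{Toeplitz} when every coordinate lies in a finite-period arithmetic progression on which $t$ is constant, and that the orbit closure of a non-periodic Toeplitz sequence is always an infinite minimal subshift. I build $t\in\{0,1\}^{\ZZ}$ in stages, fixing in advance a computable tower of periods $p_0\mid p_1\mid\cdots$ growing fast enough that holes survive every finite stage. At stage $k$ I fill a fresh union of residue classes modulo $p_k$ — chosen by a rule depending only on $(p_0,\dots,p_k)$ and \emph{not} on $D$ — with a short code word whose content records the bit $D(k)$, leaving the remaining positions as $p_k$-periodic holes to be filled later. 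Since every coordinate is eventually filled, $t$ is Toeplitz; since holes persist forever, $t$ is non-periodic; hence $X:=\overline{\{\sigma^n t:n\in\ZZ\}}$ is infinite and minimal. As each stage is a finite computation consulting only $D(k)$, the construction is uniform in $D$, so $t\turinf D$.

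The crux, and the step I expect to fight with, is recovering $D$ effectively from an \emph{arbitrary} $y\in X$, which is needed both for (ii) and for the inequality $t\tursup D$ in (i): a general point of $X$ is only a limit of shifts of $t$, so there is no fixed coordinate one can read $D(k)$ off of. The rigidity of the Toeplitz structure is what saves the argument. The stage-$k$ coordinates form a union of residue classes modulo the \emph{known} period $p_k$, so in any $y\in X$ they recur syndetically with gap at most $p_k$ and all carry the same stage-$k$ code. Given $y$ as an oracle one can therefore locate the level-$k$ skeleton inside a window of computably bounded length — the lower-level periodic skeleton pins down the global offset modulo $p_k$ — and read out the recorded bit, uniformly in $k$. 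This yields a $y$-computable procedure returning $D(k)$ for every $k$, so $D\turinf y$ and (ii) holds; applying it to $y=t$ gives $D\turinf t$, whence $t\turequiv D$ and $\turdeg t=\turdeg d$, which is (i). Combining (i), (ii) and the previous theorem as above shows that $\mathrm{Spec}(X)$ is precisely the cone of base $\turdeg d$.
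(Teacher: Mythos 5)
Your strategy is sound and the statement is true, but you take a genuinely different route from the paper. The paper disposes of this theorem in one line: take $X$ to be a Sturmian subshift whose irrational rotation angle $\alpha$ has degree $\turdeg d$; every point of such a subshift computes $\alpha$ (the angle is recoverable from the combinatorics of the coding), and the reverse containment --- every degree $\tursup\turdeg\alpha$ is realized --- is exactly the quoted Jeandel--Vanier theorem. (The paper also remarks that the statement follows from Proposition 3.1 of Miller.) You instead build a bespoke Toeplitz subshift coding a fixed set $D$ of degree $\turdeg d$, and invoke the same Jeandel--Vanier theorem for the upper containment. Your route is more self-contained, transfers painlessly to $\ZZ^d$, and in fact anticipates the nested-language construction used for the paper's main theorem; the Sturmian route is an off-the-shelf example that requires no new construction but leans on classical facts about Sturmian words.

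One step of your argument needs more than you have written, and it is precisely the one you flag as the crux. The claim that ``the lower-level periodic skeleton pins down the global offset modulo $p_k$'' inside a window of \emph{computably} bounded length is not automatic for an arbitrary filling rule: for a general $y$ in the orbit closure, the offset modulo $p_k$ is determined by some finite window only by continuity and compactness, and the length of that window depends on the subshift, hence on $D$ --- which is circular, since $D\turinf y$ is what you are trying to prove. You must build recognizability into the construction, for instance by reserving a fixed, $D$-independent residue class for a marker pattern that makes the decomposition into level-$k$ blocks unique and locatable within $O(p_k)$ symbols uniformly in $k$ and in $D$. The same marker also guarantees that $t$ is non-periodic for \emph{every} $D$ (otherwise, for degenerate choices of code words, $X$ could collapse to a finite orbit with spectrum $\{\turdegzero\}$). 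With these two standard fixes the proof is complete.
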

For instance the spectrum of a Sturmian subshift \cite{MH1940} with an irrational angle 
is the cone whose base is the degree of the angle of the rotation. The 
theorem can also be seen as a corollary of Miller's 
proof~\cite{Mil2012}[Proposition 3.1] of a result on Medvedev degrees. 

In this paper, we prove the following result:

\begin{theorem}\label{thm:existence}
    There exist a minimal subshift $X\subset\{0,1\}^\ZZ$ and points $x_z\in X$ with $z\in \cantor$ such that for any $z\neq z'\in\cantor$, $\turdeg x_z$ and $\turdeg x_{z'}$ are incomparable and  such that there exists no point $y\in X$ with 
$\turdeg y\turinf\turdeg x_z$ and
$\turdeg y\turinf \turdeg x_{z'}$.
\end{theorem}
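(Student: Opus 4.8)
The plan is to build $X$ by an explicit inductive, hierarchical (substitution-with-markers) construction and to decouple the problem into a \emph{combinatorial} task (make $X$ minimal and equip every point with a decodable hierarchical address) and a \emph{degree-theoretic} task (choose the data coded along that address from a perfect set of mutually generic reals). First I fix a perfect set $\{\zeta_z\}_{z\in\cantor}\subset\cantor$ of reals that are mutually sufficiently generic, so that they are pairwise Turing-incomparable and any two distinct members form a minimal pair, i.e.\ no non-computable real is computable from both; such perfect sets are produced by a standard tree-of-conditions forcing. These two properties are exactly what drives the conclusion.

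Indeed, the construction will be arranged so that (i) for every $z$ there is a distinguished point $x_z\in X$ with $x_z\turequiv\zeta_z$, and (ii) no point of $X$ is computable. Granting (i)--(ii), incomparability of $\turdeg x_z$ and $\turdeg x_{z'}$ is immediate from the incomparability of the $\zeta_z$; and if some $y\in X$ satisfied $\turdeg y\turinf\turdeg x_z$ and $\turdeg y\turinf\turdeg x_{z'}$, then $y\turinf\zeta_z$ and $y\turinf\zeta_{z'}$, so the minimal-pair property forces $\turdeg y=\turdegzero$, contradicting (ii). Thus the entire theorem reduces to producing one minimal subshift realizing (i) and (ii).

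For the construction I build finite languages $L_0,L_1,\dots$ in which each word of $L_{n+1}$ is a concatenation of words of $L_n$ and every word of $L_n$ occurs in every word of $L_{n+1}$ (primitivity), and take $X$ to be the set of bi-infinite sequences all of whose factors occur in some $L_n$; primitivity yields minimality. At a sparse, computably placed set of levels I leave a single \emph{free bit} in the substitution rule, so that a point is determined by its hierarchical address together with the bi-infinite sequence of free bits it uses. Recovering the block boundaries from a point is computable when the lengths $\ell_n$ grow fast enough, so from any $y\in X$ one computes its sequence of free bits, and conversely $y$ is reconstructed from that sequence together with a phase. The point $x_z$ is the one whose free bits, read outward from the origin, spell $\zeta_z$; this gives $\zeta_z\turinf x_z$, and rigidity of the horizontal arrangement (the phase contributes no Turing information beyond what the address already fixes) gives $x_z\turinf\zeta_z$, hence (i). For (ii) I force every admissible free-bit sequence, and hence every point, to compute a fixed non-computable pattern built into the skeleton.

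The main obstacle is reconciling minimality with coding. Primitivity forces every finite free-bit word to recur throughout each point, which pulls against pinning a single global real $\zeta_z$ to a point and against keeping $x_z\turequiv\zeta_z$ once the phase/odometer coordinate is accounted for. The delicate part of the induction is therefore the bookkeeping that (a) inserts free bits sparsely enough that boundary-decoding stays uniform and the phase carries no Turing information, (b) copies every pattern densely enough to guarantee uniform recurrence, and (c) does both while preserving, level by level, the genericity and independence of the coded reals, so that the incomparability and minimal-pair properties of the $\zeta_z$ are inherited by the $x_z$. I expect the fast growth of the $\ell_n$, together with a clean separation between ``coding levels'' and ``recurrence levels,'' to be what makes these three constraints simultaneously satisfiable.
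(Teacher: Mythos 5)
Your reduction to two properties --- (i) $x_z\turequiv\zeta_z$ for a pre-chosen perfect set of mutually generic reals $\zeta_z$, and (ii) $X$ has no computable point --- is clean, and granting (i) and (ii) the theorem would indeed follow. The problem is that (i) and (ii), together with the decodability you rely on, are jointly unsatisfiable, so the reduction cannot be completed. Your own hypotheses (every word of $L_n$ occurs in every word of $L_{n+1}$, and block boundaries are computable from any point) imply that every level-$n$ block of every point can be located and read off, and that \emph{all} words of $L_n$ occur in $x_z$; since the cardinalities $\card{L_n}$ are known in advance, every $x_z$ therefore computes the entire sequence of languages $(L_n)_{n}$. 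Writing $C$ for (a real coding) this sequence, we get $C\turinf\zeta_z$ for every $z$, and the minimal-pair property of two of the $\zeta$'s forces $C$ to be computable. But if $(L_n)_n$ is computable then $\mathcal L(X)=\bigcap_n\mathcal L(X_n)$ is co-c.e., so $X$ is an effective subshift, and a minimal effective subshift always contains a computable point (the fact cited as~\cite{BJ2010} in the paper) --- contradicting (ii). This is not a removable artifact: the paper explicitly observes that the construction cannot be effectivized, so some non-computable ``skeleton'' oracle $C$ is necessarily coded into every point. Consequently the correct statements are $x_z\turequiv\zeta_z\oplus C$, with the $\zeta_z$ mutually generic \emph{over} $C$, and (ii) strengthened to ``no point of $X$ is computable from $C$''; none of this relativization appears in your write-up, and it introduces a circularity ($C$ is the output of the very construction that must diagonalize against the reals below $C$) that has to be resolved explicitly.

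The second gap is that the part you flag as ``the main obstacle'' --- reconciling minimality with coding so that $x_z$ computes no more than $\zeta_z$ (modulo the skeleton), and so that distinct branches have no common non-trivial lower bound --- is precisely where all the work lies, and you leave it as a hope about ``fast growth of the $\ell_n$.'' The paper resolves exactly this tension with Lemma~\ref{lem:twowords}: for each pair of branch-words and each pair of partial recursive operators one extends the words inside $L^*$ so that either the two operator outputs permanently disagree or their common value falls in a fixed countable set \removable; and since those common values need \emph{not} be computable (the operators' restricted domains can be complicated), one must additionally arrange that no point of $X$ lies in any minimal subshift containing a point of \removable (condition~\ref{cond:avoid}). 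Your plan has no counterpart to this last exclusion step, because you are implicitly assuming the common lower bounds are computable --- which is exactly what fails once the relativization above is taken into account. Note also that the paper never tries to pin down $\turdeg x_z$ at all, which is what lets it avoid both of these difficulties; if you want to keep your architecture, you should prove the relativized versions of (i) and (ii) and replace the appeal to a pre-chosen perfect set of minimal pairs by a level-by-level diagonalization of the kind in Lemma~\ref{lem:twowords}.
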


The subshift constructed in this proof is not effective and cannot be "effectivized", since minimal effective subshifts always contain 
a computable point and thus their spectra are the whole set of 
Turing degrees when they are non-periodic.

\section{Preliminary definitions}\label{S:preliminaries}

We give here some standard definitions and facts about
subshifts, one may consult the book of \citet{LindMarkus} 
for more details.

Let $\Sigma$ be a finite alphabet, its elements are called \emph{symbols}, the 
$d$-dimensional
full shift on $\Sigma$ is the set
$\Sigma^{\ZZ^d}$ of all maps (colorings) from $\ZZ^d$ to the $\Sigma$ (the 
colors). For
$v\in\ZZ^d$, the shift functions
$\sigma_v:\Sigma^{\ZZ^d}\to\Sigma^{\ZZ^d}$, are defined locally by
$\sigma_v(c_x)=c_{x+v}$. The full shift equipped with the distance
$d(x,y)=2^{-\min\left\{\left\|v\right\|\middle\vert v\in\ZZ^d,x_v\neq
y_v\right\}}$ is a compact metric space on which the shift functions act
as homeomorphisms. An element of $\Sigma^{\ZZ^d}$ is called a
\emph{configuration}.

Every closed shift-invariant (invariant by application of any $\sigma_v$)
subset $X$ of $\Sigma^{\ZZ^d}$ is called a \emph{subshift}. An element of a
subshift is called a \emph{point} of this subshift.

Alternatively, subshifts can be defined with the help of forbidden patterns. A
\emph{pattern} is a function $p:P \to \Sigma$, where $P$, the \emph{support}, is a finite subset of $\ZZ^d$. We say that a configuration
$x$ contains a pattern $p:P\to Sigma$, or equivalently that the pattern 
$p$ appears in $x$, if there exists $z\in\ZZ^d$ such that
$x_{|z+P}=p$.

Let $\mathcal F$ be a collection of \emph{forbidden} patterns, the
subset $X_F$ of $\Sigma^{\ZZ^d}$ containing the configurations having nowhere a
pattern of $F$. More formally, $X_{\mathcal F}$ is defined by

$$X_{\mathcal F}=\left\{x\in\Sigma^{\ZZ^d}\middle\vert \forall z\in\ZZ^d,\forall
p\in F, x_{|z+P}\neq p \right\}\textrm{.}$$

In particular, a subshift is said to be a \emph{subshift of finite type} (SFT)
when it can be defined by a collection of forbidden patterns that is finite. Similarly, an \emph{effective subshift} is a subshift which can be defined by
a recursively enumerable collection of forbidden patterns. A subshift is 
\emph{sofic} if it is the image of an SFT by a letter by letter function.

\begin{definition}[Minimal subshift]
 A subshift $X$ is called \emph{minimal} if it verifies one of the following 
 equivalent conditions:
 \begin{itemize}
  \item There is no subshift $Y$ such that $Y\subsetneq X$.
  \item All the points of $X$ contain the same patterns.
  \item It is the closure of the orbit of any of its points.
 \end{itemize}
\end{definition}
We will use the two latter conditions. 

For $x,y\in\cantor$, we say that $x\turinf y$ if there exists a Turing machine 
$M$ such that $M$ with oracle $y$ computes $x$. Of course $x\turequiv y$ when we
have both $x\turinf y$ and $y\turinf x$. The Turing degree of $x$ is the 
equivalence class of $x$ with respect to $\turequiv$. 
More details can be found in Rogers~\cite{Rogers}. We call \emph{recursive operator} a partial function $\phi:\cantor\to\cantor$ corresponding to a Turing
machine whose input is its oracle and output is an infinite sequence. We say
that the function is undefined on the inputs on which the Turing machine does not output an infinite sequence of bits.

For a possibly infinite word $w=w_0\dots w_n$, we denote $w_{[i,j]}=w_i\dots w_j$.

\section{Minimal subshifts with several cones}

\begin{lemma}\label{lem:twowords}
There exists a countable set $\removable\subseteq\cantor$ such that for any two recursive partial operators $\phi_1,\phi_2:\cantor\to\cantor$ and two distinct words 
$L=\{w_1,w_2\}\subseteq\zol$.  There exist two words $w_1',w_2'\in L^*$ starting 
respectively with $w_1$ and $w_2$
 such that we have one of the following:
 \begin{enumerate}[(a)]
  \item \label{lem:twowords:case:nonrec} either for any pair $x,y\in\cantor$, $\phi_1(w_1'x)$ differs from $\phi_2(w_2'y)$ when they are both defined,
  \item \label{lem:twowords:case:rec} or for any pair $x,y\in\cantor$, $\phi_1(w_1'x)=\phi_2(w_2'y)\in \removable$ when both defined.
 \end{enumerate}
\end{lemma}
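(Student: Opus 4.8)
The plan is to take $\removable$ to be the set of \emph{computable} elements of $\cantor$; this is countable and fixed once and for all, independently of $\phi_1,\phi_2,L$. The reason this is the right target for case~(\ref{lem:twowords:case:rec}) is that a common value is automatically computable: if $\phi_1(w_1'x)=\phi_2(w_2'y)$ holds for all tails, then this value does not depend on $x$, so it equals $\phi_1(w_1'z)$ for a \emph{computable} tail $z$ (e.g.\ $z=0^\NN$), and running the recursive operator $\phi_1$ on the computable input $w_1'z$ computes it. I will first dispose of the vacuous situation: if there is no pair $(x,y)$ on which $\phi_1(w_1'x)$ and $\phi_2(w_2'y)$ are simultaneously defined, then both alternatives hold vacuously and I declare case~(\ref{lem:twowords:case:nonrec}). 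So from now on I may assume at least one jointly defined pair, which pins down a genuine common candidate $s$.

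For an operator $\phi$ and a finite word $u$, let $\bar\phi(u)\in\zol\cup\cantor$ be the output $\phi$ produces while reading only inside $u$ (running until it would query a bit past $u$, possibly forever). This is monotone in $u$, and $\bar\phi(u)\preceq\phi(ux)$ for every tail $x$ on which $\phi(ux)$ is defined. The pivotal observation is that committed prefixes are inherited by \emph{all} tails: hence if the construction ever reaches words $w_1'\in L^*$ extending $w_1$ and $w_2'\in L^*$ extending $w_2$ with $\bar\phi_1(w_1')$ and $\bar\phi_2(w_2')$ incomparable (write $\perp$), then $\phi_1(w_1'x)\perp\phi_2(w_2'y)$, and in particular they differ, for every $x,y\in\cantor$, giving case~(\ref{lem:twowords:case:nonrec}).

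The construction attempts to reach such a \emph{separated} pair. Starting from $(w_1,w_2)$ I maintain that $\bar\phi_1(p)$ and $\bar\phi_2(q)$ are comparable, with common part $r$ of length $n$, and I try to extend the shorter one past the point of agreement. At each stage I examine, for each operator, the set of values in $\{0,1\}$ that its $(n{+}1)$-st committed bit can take as the current word ranges over its extensions inside $L^*$. If some extension of $p$ and some extension of $q$ disagree on bit $n$, I take them, land in a separated pair, and stop in case~(\ref{lem:twowords:case:nonrec}). Otherwise every reachable $(n{+}1)$-st bit agrees on a single value $v$, and I extend $p,q$ minimally so as to commit $v$, lengthening $r$, and iterate.

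If this process never separates, the committed outputs are confined to a single branch and converge to one sequence $s$, which is computable by the argument above, so $s\in\removable$; one then checks that every jointly defined output equals $s$, yielding case~(\ref{lem:twowords:case:rec}). The main obstacle is precisely this last passage: converting non-separation—an assertion only about $L^*$-extensions—into the \emph{universal} statement over arbitrary tails $x,y\in\cantor$ in the presence of partial definedness. Making the greedy step well founded (so that the next bit can always be committed, via a compactness / König's lemma argument on the tree of $L^*$-extensions) and certifying that $s$ is forced on every tail is where the real work lies. I also expect to use crucially that $w_1$ and $w_2$ are \emph{incomparable} words, as they are in the underlying substitutive construction: if one were a prefix of the other the statement can fail—for instance $\phi_1=\mathrm{id}$, $\phi_2\equiv 0^\NN$, $w_1=0$, $w_2=00$ (so $L^*\subseteq 0^*$) admits neither alternative—so incomparability is exactly what seeds, and lets the construction preserve, the separation exploited in case~(\ref{lem:twowords:case:nonrec}).
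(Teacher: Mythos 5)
There is a genuine gap, and it sits exactly where you place the weight of your argument: the choice of $\removable$ as the set of computable sequences, together with the claim that a common value is automatically computable. Your justification --- ``it equals $\phi_1(w_1'z)$ for a computable tail $z$, e.g.\ $z=0^\NN$'' --- fails because the operators are \emph{partial}: nothing in case~(b) forces $w_1'0^\NN$ (or any computable oracle) into the domain of $\phi_1$. The domain of a recursive operator is a $\Pi^0_2$ class, which can be nonempty while containing no computable point, and an operator that is constant on such a domain can take a non-computable value (for instance, an operator that on oracle $u$ progressively verifies $u\in\bigcap_n V_n$ for a $\Pi^0_2$ singleton $\{B\}$ with $B$ non-computable, copying out bits of $u$ as each $V_n$ is confirmed). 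Rescuing your choice of $\removable$ is possible, but it requires precisely the argument you defer as ``where the real work lies'': one must show that if no pair of $L^*$-extensions ever commits conflicting output bits, then either one side is everywhere undefined on $L^\omega$-tails (case~(a) vacuously), or every output bit is eventually committed by \emph{some} $L^*$-extension and all such commitments agree, whence a blind search over extensions computes $s$. You announce this program but do not carry it out, and the ``run it on $0^\NN$'' shortcut offered in its place is false. The paper sidesteps the whole issue: it defines $\removable$ \emph{a posteriori} as the set of common output values arising from those quadruples $(\phi_1,\phi_2,w_1,w_2)$ that fall into the second case of its dichotomy; membership in $\removable$ is then true by definition, and countability follows from counting quadruples, with no computability claim anywhere.

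Your closing observation is also more damaging than you treat it as being. You are right that with $w_1=0$, $w_2=00$, $\phi_1=\mathrm{id}$, $\phi_2\equiv 0^\NN$ neither alternative can hold, but the correct diagnosis is not prefix-comparability of $w_1,w_2$; it is the quantifier ``for any pair $x,y\in\cantor$''. Even for incomparable words the $\cantor$-version fails: take $L=\{00,11\}$ and let both operators output the oracle from its first ``defect'' $u_{2i}\neq u_{2i+1}$ onward --- for every choice of $w_1',w_2'$ some tails give equal outputs and others give different ones. The statement that is actually provable, and the one the paper uses in condition~3 of the main construction, restricts the tails to $x,y\in L^\omega$; your separation machinery lives entirely on $L^*$-extensions and can therefore only ever certify that restricted version. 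So the step you postpone --- upgrading non-separation over $L^*$ to a universal statement over all of $\cantor$ --- is not merely unfinished, it is impossible in general, and the proof should be re-aimed at the $L^\omega$ statement from the start.
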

\begin{proof}
 Let $M_1,M_2$ be the Turing machines computing the functions 
$x\mapsto\phi_1(w_1x),x\mapsto\phi_2(w_2x)$
 respectively. When restricting ourselves to inputs on which both operators are defined, it is quite clear that: 
 \begin{itemize}
  \item either there exists some sequences $x,y\in L^*$ such that $M_1(x)$'s 
    output differs from $M_2(y)$'s output at some step, 
  \item or the outputs of both machines $M_1,M_2$ do not depend on their
      inputs on their respective domains and are equal, in this latter case, 
      we are in case~\ref{lem:twowords:case:rec}. We define $\removable$ 
      to be the set of these outputs.
 \end{itemize}
 In the former case, there exist prefixes $w_1'$ and $w_2'$ of $w_1x$ and $w_2y$ 
 such that the partial outputs of $M_1$ once it has read $w_1'$ already 
 differs from the partial output of $M_2$ once it has read $w_2'$. 

 In the latter case, one may take $w_1'=w_1$ and $w_2'=w_2$. \removable is 
 countable since there is a countable number of quadruples 
 $\phi_1,\phi_2,w_1,w_2$.
\end{proof}

\begin{theorem}
 There exists a minimal subshift $X\subseteq \cantor$ whose set of Turing degrees contains 
$2^{\aleph_0}$ disjoint cones of Turing degrees.
\end{theorem}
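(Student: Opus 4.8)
The plan is to build the minimal subshift $X$ as the closure of the orbit of a single point $x_z$ for each $z\in\cantor$, where all these points share the same language (so they live in the same minimal subshift), yet their Turing degrees form $2^{\aleph_0}$ mutually incomparable cones with pairwise-disjoint bases. Concretely, I would construct $X$ inductively as a \emph{substitutive-like} or \emph{$S$-adic} subshift: fix a nested sequence of finite ``block codings'' $L_0=\{0,1\}$, then $L_{k+1}=\{w_1^{(k)},w_2^{(k)}\}\subseteq (L_k)^*$, so that every point of $X$ is obtained by reading off, level by level, an infinite sequence of choices between the two words at each stage. A point $x_z$ is then determined by the bit-sequence $z\in\cantor$ telling us, at each level, which of the two level-$k$ words to place. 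Minimality will be guaranteed by the standard device of making the construction \emph{uniformly recurrent}: arrange at each level that each of $w_1^{(k)},w_2^{(k)}$ occurs in any sufficiently long block, so all points of $X$ have the same language.

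The heart of the argument is to choose the words $w_1^{(k)},w_2^{(k)}$ at each level so as to defeat, one Turing reduction at a time, any attempt to compute a point of $X$ \emph{below} two of the $x_z$'s simultaneously. Here Lemma~\ref{lem:twowords} is the engine: enumerate all triples $(\phi_1,\phi_2,\text{pair of level-}k\text{ words})$ and, at each stage of the inductive construction, apply the lemma to the current two candidate words $L=\{w_1,w_2\}$ and the current pair of recursive operators $\phi_1,\phi_2$. The lemma returns extensions $w_1',w_2'\in L^*$ (still beginning with $w_1,w_2$, so compatible with what was already committed) falling into one of two cases. In case~(\ref{lem:twowords:case:nonrec}), \emph{any} common computation $\phi_1(w_1'x)=\phi_2(w_2'y)$ is forbidden outright, so no point computed this way from $x_z$ via $\phi_1$ and from $x_{z'}$ via $\phi_2$ can exist; in case~(\ref{lem:twowords:case:rec}), any such common computation lands in the fixed countable set \removable, which I will arrange to avoid meeting the spectrum (or simply note it contributes only countably many degrees that can be excluded). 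Iterating this over a bookkeeping enumeration of all pairs of operators diagonalizes against every potential common lower bound.

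The key inferences to nail down are: (i) the two-word choices can be made \emph{simultaneously consistent} across all the countably many diagonalization requirements — this is the usual priority/bookkeeping argument, where at stage $k$ we satisfy the $k$-th requirement using Lemma~\ref{lem:twowords} on the words available at that level, and because the lemma's output words merely \emph{extend} the committed prefixes within $L^*$, later stages never undo earlier commitments; (ii) the resulting subshift is genuinely minimal, which follows from building in recurrence at every level; and (iii) incomparability of $\turdeg x_z$ and $\turdeg x_{z'}$ for $z\neq z'$, together with the absence of a common lower bound inside $X$, is exactly what the case analysis of the lemma delivers once \emph{every} pair of operators has been handled. Since distinct $z\in\cantor$ eventually diverge in their level-choices, there are $2^{\aleph_0}$ distinct points $x_z$, and the diagonalization ensures their degrees sit atop $2^{\aleph_0}$ cones with pairwise-disjoint bases.

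The main obstacle I anticipate is (i): making the inductive word-choices cohere. Lemma~\ref{lem:twowords} is stated for a \emph{single} pair of operators and a \emph{single} pair of words, but I must meet \emph{all} requirements at once while simultaneously preserving minimality (which forces both level-$k$ words to recur often) and while keeping the $2^{\aleph_0}$ branches $z\mapsto x_z$ all alive and pairwise incomparable. Reconciling the ``avoid \removable / force a difference'' demands of the lemma against the recurrence demands of minimality — so that a word chosen to separate two computations does not destroy the uniform recurrence needed for minimality — is the delicate bookkeeping at the core of the construction, and I expect it to require interleaving the diagonalization steps with explicit ``filling'' steps that reinsert both words to restore recurrence without reviving any killed common computation.
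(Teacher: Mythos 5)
Your proposal has the right engine (Lemma~\ref{lem:twowords}) and the right general shape (nested finite languages, recurrence at every level for minimality, diagonalization against pairs of operators), but there are two genuine gaps. The first is the branching structure. You keep exactly \emph{two} words at every level, with $w_i^{(k+1)}$ extending $w_i^{(k)}$ as a prefix. Under this prefix-nesting there are only two infinite branches, namely $\lim_k w_1^{(k)}$ and $\lim_k w_2^{(k)}$: a choice sequence $z$ that switches index between levels does not define a point at all, since $w_2^{(k+1)}$ does not begin with $w_1^{(k)}$. So the construction as described yields at most two distinguished points and two cones, not $2^{\aleph_0}$, and your closing claim that ``distinct $z$ eventually diverge in their level-choices'' has nothing to attach to. What is needed is a genuinely branching tree of words: at level $i$ keep $2^i$ words $w_b$, $b\in\{0,1\}^i$, where $w_{b0}$ and $w_{b1}$ are two \emph{distinct} extensions of $w_b$; then $x_z=\lim_i w_{z_{[0,i]}}$ ranges over continuum many points. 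Lemma~\ref{lem:twowords}, which treats a single pair of words and a single pair of operators, must then be applied once for every pair of words among the $2^{i+1}$ level-$(i+1)$ candidates and every pair $\phi_j,\phi_{j'}$ with $j,j'\leq i$ --- finitely many applications per level, each merely extending the committed prefixes inside $L_i^*$, which is also what resolves the bookkeeping worry you raise in (i).

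The second gap is the treatment of case~(\ref{lem:twowords:case:rec}). When the common value lands in \removable you propose to ``arrange to avoid meeting the spectrum'' or to ``note it contributes only countably many degrees that can be excluded,'' but no mechanism is given, and none is available after the fact: once $X$ is built its spectrum is determined, and even pairwise incomparability of the $x_z$ (take one operator to be the identity) fails if some $x_z$ itself lies in \removable. The needed device is this: since \removable is countable and a minimal subshift is the orbit closure of any of its points, there are only countably many minimal subshifts $\minimal_j$ containing a point of \removable; at stage $i$, because the stage-$i$ sofic subshift is \emph{not} minimal, one can further extend the level-$(i+1)$ words so that they contain a pattern occurring in no configuration of $\minimal_j$ for $j\leq i$. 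The minimal limit $X$ then coincides with no $\minimal_j$, hence contains no point of \removable, and case~(\ref{lem:twowords:case:rec}) can never produce a common lower bound inside $X$. This avoidance step is a substantive missing idea, not a routine remark, and without it the ``disjointness'' of the cones is not established.
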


Note that the following proof is in no way effective. As a matter of fact, 
all effective minimal subshifts contain some recursive point~\cite{BJ2010}, 
and their set of Turing degrees is the cone of \emph{all} degrees.

\begin{proof}

We construct a sequence of sofic subshifts $(X_i)_{i\in\NN}$ such that $X_{i+1}\subseteq 
X_i$ and such that the limit 
$X =\bigcap_{i\in\NN} X_i$ is minimal. In the process of constructing the 
$X_i$, which will be formed of
concatenations of allowed words $x^i_1,\dots ,x^i_k$, we ensure that no 
extensions of two distinct words 
may compute an identical sequence with any of the first $i$ Turing machines. At 
the same time, we make sure
that all allowed words of level $i+1$ contain all words of level $i$, thus 
enforcing the minimality of the
limit $X$. We also have to avoid that the limit $X$ contains a computable point.

Let $(\minimal_i)_{i\in\NN}$ be an enumeration of all minimal subshifts 
containing a point of the set \removable defined in lemma~\ref{lem:twowords}.
Such
an enumeration exists since \removable is countable and minimal subshifts are the closure of any of their points. We will also need an enumeration 
$(\phi_i)_{i\in\NN}$ of the 
partial recursive operators from \cantor to \cantor. 

Now let us define the sequence of sofic shifts $(X_i)_{i\in\NN}$. 
Each of these subshifts
will be the shift invariant closure of the biinfinite words formed by 
concatenations
of words of some language $L_i$ which here will be finite languages. We define
$X_0=\cantor$ that is to say $X_0$ is generated by $L_0=\{w_0=0,w_1=1\}$. Let 
us now give
the induction step. At each step, $L_{i+1}$ will contain $2^{i+1}$ words 
$w_{0\dots0},\dots,
w_{1\dots1}$, the indices being binary words of length $i+1$, which will verify 
the following conditions:
\begin{enumerate}
 \item\label{cond:start} The words $w_{b0},w_{b1}$ of $L_{i+1}$ start with the 
word $w_b$ of $L_i$.
 \item\label{cond:all} The words $w_b$ with $b\in\{0,1\}^{i+1}$ of $L_{i+1}$ 
each contain
 all the words of $w_{b'}$ with $b'\in\{0,1\}^i$ of $L_i$.
 \item\label{cond:mutuallower} For any two words $w_b,w_{b'}$ of $L_{i+1}$ and for all $j,j'\leq i$:
\begin{itemize}
    \item Either for  all $x,y\in L_i^\omega$, 
        $\phi_j(w_bx)\neq\phi_{j'}(w_{b'}y)$ when both defined,
    \item Or for  all $x,y\in L_i^\omega$, $\phi_j(w_bx),\phi_{j'}(w_{b'}y)$ are in \removable when defined.
\end{itemize}
 \item\label{cond:avoid} The words $w_{b0},w_{b1}$ do not appear in any 
configuration of $\minimal_j$,
 for all $j\leq i$.
\end{enumerate}

Conditions~\ref{cond:start} and~\ref{cond:all} are easy to ensure: just make 
$w_{ba}$ start with $w'=w_{b}w_{0\dots0}\dots w_{1\dots1}$. We then use 
Lemma~\ref{lem:twowords} to extend $w'$ into a word $w''$ verifying 
condition~\ref{cond:mutuallower}, this is done several times, once for 
every quadruple $w,w',j,j'$.
And finally, since $X_i$ is not minimal, we can extend $w''$ so that it
contains a pattern appearing in none of the $\minimal_j$'s for $j\leq i$, to 
obtain 
condition~\label{cond:avoid}. Now we can extend $w''$ with two different 
words thus obtaining $w_{b0}$ and $w_{b1}$.

Now let's check that this leads to the desired result: 
\begin{itemize}
 \item $X=\bigcap X_i$ is a
countable intersection of compact shift-invariant spaces, it is compact and
shift-invariant, thus a subshift. 
 \item Any pattern $p$ appearing in some point 
of $X$ is contained in a pattern $w_b$, with $b\in\{0,1\}^i$ for some $i$,
by construction (condition~\ref{cond:all}), all $w_{b'}$ with 
$b\in\{0,1\}^{i+1}$ contain $w_b$. 
Therefore, all points of $X$, since they are contained in $X_{i+1}$, contain 
$w_b$ and hence $p$. So $X$ is minimal.
 \item For all $z\in\cantor$, define the points $x_z=\lim_{i\to\infty} 
  x_{z[0,i]}$, they are in $X$ because they belong to each $X_i$. Condition~\ref{cond:mutuallower} ensures that if two of them compute the same 
  sequence $y\in\cantor$, then this sequence is in \removable. And 
condition~\ref{cond:avoid} ensures that no point of $X$ belongs to a minimal subshift containing a point of \removable.
\end{itemize}

\end{proof}

It is quite straightforward to transform this proof in order to get a subshift
on $\{0,1\}^\ZZ$ instead of \cantor and obtain the following corollary:

\begin{corollary}
    For any dimension $d$, there exists a minimal subshift 
    $X\subseteq\{0,1\}^{\ZZ^d}$ whose set of Turing degrees contains 
    $2^{\aleph_0}$ disjoint cones of Turing degrees.
\end{corollary}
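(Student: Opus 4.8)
The plan is to build the subshift $X$ as a decreasing intersection $X=\bigcap_i X_i$ of sofic subshifts, where each $X_i$ is the orbit closure of biinfinite concatenations of a finite language $L_i$. I need to track both a \emph{minimality} requirement and a \emph{degree-separation} requirement simultaneously, so the core of the proof is an inductive construction of the languages $L_i$. Starting from $L_0=\{0,1\}$, at stage $i+1$ I will split each word $w_b$ of $L_i$ (indexed by $b\in\{0,1\}^i$) into two children $w_{b0},w_{b1}\in L_{i+1}$, each beginning with $w_b$, so that the tree of words is binary and every $z\in\cantor$ selects a nested sequence $w_{z[0,i]}$ converging to a point $x_z\in X$. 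The point labelled $x_z$ will carry the Turing degree associated to the branch $z$.

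\paragraph*{Meeting the requirements.} I would impose four conditions on $L_{i+1}$, handled in order. First, each $w_{ba}$ should start with $w_b$ (this makes the limits $x_z$ well defined). Second, each level-$(i+1)$ word should contain \emph{all} level-$i$ words as factors; together with the nesting this forces every pattern appearing in $X$ to recur in every point, which is exactly the second characterization of minimality. Third — the separation condition — for every pair of words and every pair $j,j'\le i$ of machine indices, I want to guarantee that extensions of two distinct words can never yield the same output under $\phi_j,\phi_{j'}$ \emph{unless} that common output is ``trivial'' (lands in the countable removable set \removable). Fourth, I must ensure the constructed words avoid all minimal subshifts $\minimal_j$, $j\le i$, that pass through a point of \removable, so that $X$ dodges those finitely many ``bad'' minimal systems at each stage.

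\paragraph*{Carrying out the step.} The first two conditions are cheap: I prefix the new word by the concatenation $w_b w_{0\dots 0}\cdots w_{1\dots 1}$ of $w_b$ with all of $L_i$. The third condition is precisely where Lemma~\ref{lem:twowords} does the work: applied to the pair $(\phi_j,\phi_{j'})$ and two words, it returns extensions forcing either permanent disagreement (case~\ref{lem:twowords:case:nonrec}) or agreement inside \removable (case~\ref{lem:twowords:case:rec}); I apply it once per quadruple $(w,w',j,j')$ and compose the extensions. The fourth uses non-minimality of $X_i$: since $X_i$ is not minimal, I can always append a further block realizing a pattern that appears in none of the finitely many $\minimal_j$. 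Finally I append two distinct symbols to obtain the two children.

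\paragraph*{The verification and the main obstacle.} Once the $L_i$ are built, I verify that $X$ is a subshift (compact, shift-invariant as an intersection of such), that it is minimal (via the factor-containment condition), and that the $x_z$ have pairwise incomparable degrees with no common lower bound: if $\turdeg y\turinf\turdeg x_z$ and $\turdeg y\turinf\turdeg x_{z'}$ for distinct $z,z'$, then $y=\phi_j(x_z)=\phi_{j'}(x_{z'})$ for some machines, and condition~\ref{cond:mutuallower} forces $y\in\removable$; but condition~\ref{cond:avoid} guarantees no point of $X$ lies in a minimal subshift through a point of \removable, and $y$'s orbit closure is such a subshift (recall $X$ is minimal, so $y$ together with $x_z$ forces membership) --- a contradiction, so no such $y$ exists. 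I expect the delicate point to be getting the quantifier ``for all $x,y\in L_i^\omega$'' in condition~\ref{cond:mutuallower} to persist in the limit, i.e.\ that the \emph{finite} prefix-disagreement secured by Lemma~\ref{lem:twowords} really does propagate to \emph{all} biinfinite extensions living in $X$; ensuring this requires that every point of $X$ past level $i$ genuinely begins with one of the tree words, which is why the nesting and the ``starts with $w_b$'' conditions must be maintained throughout.
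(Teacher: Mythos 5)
Your construction is essentially identical to the paper's: the same tree of finite languages $L_i$, the same four inductive conditions, the same use of Lemma~\ref{lem:twowords} once per quadruple, and the same final verification via conditions~\ref{cond:mutuallower} and~\ref{cond:avoid}, so the core argument is correct and matches the paper. The only part of the corollary you do not address is the passage from the one-dimensional construction to $\{0,1\}^{\ZZ^d}$ for $d\geq 2$ (e.g.\ by taking the configurations that are constant in the remaining $d-1$ directions, which preserves minimality and the Turing degrees of points); the paper likewise leaves this step implicit, calling it straightforward.
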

\printbibliography
\end{document}